\documentclass[11pt,letterpaper]{article}

\usepackage[left=1in,right=1in,top=1in,bottom=0.9in]{geometry}

\usepackage{graphicx}
\usepackage[table,xcdraw]{xcolor}
\usepackage{wrapfig,booktabs}
\usepackage{setspace}
\usepackage{tikz}
\usepackage{amsthm}
\usetikzlibrary{decorations.shapes,shapes.geometric,shadows,arrows,automata,positioning,calendar,mindmap,backgrounds,scopes,chains,er,3d,matrix,fit}

\usepackage[utf8]{inputenc}
\usepackage{amsmath}
\usepackage{amsfonts,times}
\usepackage{cite}

\usepackage{lmodern}
\usepackage[most]{tcolorbox}

\usepackage{amssymb}
\usepackage{url}
\usepackage[hidelinks]{hyperref}

\usepackage[compact]{titlesec}
    \titlespacing{\section}{0pt}{3ex}{2ex}
    \titlespacing{\subsection}{0pt}{2.0ex}{1.0ex}
    \titlespacing{\subsubsection}{0pt}{1.0ex}{0.5ex}

\usepackage{array}
\newcolumntype{L}{>{\centering\arraybackslash}m{3cm}}

\newtheorem{theorem}{Theorem}
 \newtheorem{definition}[theorem]{Definition}
  \newtheorem{proposition}[theorem]{Proposition}

\begin{document}

\setstretch{1.67}

\pagenumbering{arabic}



\begin{center}
{\bf \Large Agentomics: Economic Foundations for the Valuation, Attribution, and Pricing of AI Agents in Human--AI Workflows \par}
\vspace{0.75em}
Quanyan Zhu\footnote{Department of Electrical and Computer Engineering, NYU Tandon School of Engineering, Brooklyn, NY 11201, USA. Email: \texttt{qz494@nyu.edu}.}
\end{center}

\begin{abstract}
Agentic AI systems are increasingly being deployed as productive resources in organizational workflows, yet existing evaluation methods primarily measure isolated technical performance rather than economic contribution. This paper introduces \emph{Agentomics}, a workflow-based framework for valuing, attributing, and pricing human and artificial agents. The framework models a workflow as a configuration of heterogeneous agents whose collective performance determines gross value, deployment cost, reliability, and expected failure loss. Workflow value is treated as a team-level quantity that may include complementarities, substitution effects, bottlenecks, and nonlinear production; additive stage-level value is only a special case. Building on this workflow model, the paper formulates AI deployment as a coalition-formation problem and defines coalition value as the incremental net surplus generated relative to a benchmark human workflow. The Shapley value is then used to attribute economic surplus among participating AI agents, yielding a principled connection among valuation, accountability, and market pricing. The resulting Shapley pricing equilibrium provides a normative benchmark for assessing whether agent prices reflect expected marginal contribution. A security-operations case study illustrates how the framework accounts for productivity gains, deployment costs, reliability losses, and coalition-level complementarities in hybrid human--AI workflows.
\end{abstract}



\section{Introduction}
\label{sec:introduction}

Artificial intelligence is rapidly evolving from a computational tool into an economic force. Recent advances in foundation models, large language models (LLMs), and agentic AI architectures have enabled AI systems to perform a growing range of cognitive tasks that were previously the exclusive domain of human workers. Modern AI agents can reason, plan, retrieve information, write software, monitor complex systems, generate forecasts, coordinate activities, and provide decision support with increasing levels of autonomy. As these capabilities continue to improve, organizations are beginning to deploy AI agents not merely as software applications but as productive resources that directly participate in organizational workflows and value creation.

Empirical evidence suggests that the economic implications of these developments are substantial. Brynjolfsson et al.~\cite{Brynjolfsson2023} report that a generative-AI assistant increased the productivity of 5,179 customer-support agents by approximately 13.8\% on average, with productivity gains exceeding 34\% among less experienced workers. Noy and Zhang~\cite{Noy2023} find that professionals using ChatGPT completed writing-intensive tasks approximately 40\% faster while simultaneously improving output quality by nearly 18\%. At a macroeconomic level, McKinsey estimates that generative AI technologies may contribute between \$2.6 trillion and \$4.4 trillion annually to the global economy~\cite{McKinsey2023}, while Goldman Sachs projects that AI-driven productivity growth could increase global GDP by approximately 7\% over the next decade~\cite{Goldman2023}. These findings suggest that AI agents are increasingly becoming economically significant productive assets whose deployment decisions may have consequences comparable to those associated with labor, capital, and technological innovation.

At the same time, organizations face increasingly difficult decisions regarding how AI agents should be integrated into production systems. Existing AI evaluation methodologies focus primarily on technical metrics such as benchmark accuracy, task-completion rates, coding performance, reasoning capability, and model size. While such metrics provide useful indicators of capability, they provide only limited insight into economic value. An agent that performs exceptionally well on technical benchmarks may create little organizational value if it introduces excessive supervision costs, coordination overhead, reliability risks, or downstream errors. Conversely, an agent with relatively modest benchmark performance may generate substantial economic value if it complements human expertise, reduces operational costs, improves workflow efficiency, or enhances organizational resilience. Consequently, technical capability and economic value are fundamentally distinct concepts.

This distinction becomes increasingly important as organizations adopt hybrid human--AI workflows. In practice, AI systems rarely replace entire organizational processes. Instead, AI adoption typically occurs incrementally, producing workflows in which humans and multiple AI agents interact across interconnected stages of decision making. Examples include cybersecurity operations that combine automated threat detection with human analysts, healthcare systems that integrate diagnostic agents with physician oversight, software-development pipelines involving coding agents and human engineers, and financial systems that combine AI-generated forecasts with human judgment. In such environments, workflow outcomes emerge from the collective behavior of multiple participants rather than from the actions of any single agent. This system-level perspective is consistent with game-theoretic approaches to cyber-physical, cyber-physical-human, and socio-technical systems, where security and resilience depend on interactions among heterogeneous agents across multiple layers~\cite{ZhuBasar2015,Huang2020DynamicGames,ZhuBasar2024SocioTechnical}. As a consequence, the economic contribution of an individual agent depends not only on its own capabilities but also on its interactions with other agents, human supervisors, upstream information sources, and downstream decision makers.

These observations give rise to a fundamental attribution problem. When a workflow succeeds, how should the resulting value be attributed among participating agents? When failures occur, how should responsibility be assigned? Which agents created value, which agents destroyed value, and which interactions contributed to the observed outcome? Existing evaluation methodologies provide limited support for answering these questions because they focus primarily on isolated task performance rather than workflow-level outcomes. Yet these questions lie at the heart of organizational decision making. Valuation requires identifying how much value an agent contributes. Pricing requires determining how much an agent should be compensated relative to that contribution. Accountability requires determining which agents should bear responsibility for success or failure. Governance and auditing require understanding how workflow outcomes emerge from interactions among heterogeneous participants.

The attribution problem becomes even more significant as AI systems become increasingly autonomous. Recent studies by the Model Evaluation and Threat Research (METR) organization suggest that the duration of tasks that frontier AI systems can complete with approximately 50\% reliability has been doubling roughly every seven months, indicating rapid improvements in long-horizon task capabilities~\cite{METR2025}. Nevertheless, reliability remains highly sensitive to workflow complexity, environmental uncertainty, and task duration. In domains such as healthcare, finance, cybersecurity, transportation, and critical infrastructure, even infrequent failures may generate substantial economic losses, safety risks, regulatory penalties, or reputational damage. These concerns are closely aligned with recent work on cyber resilience and strategic cybersecurity in the presence of intelligent, adaptive agents~\cite{Zhu2024CyberResilience,Zhu2025LLMAgenticCybersecurity}. Consequently, organizations require mechanisms not only for measuring the benefits of automation but also for quantifying the risks introduced by autonomous agents and assigning responsibility for their consequences.

These developments suggest that AI agents should be viewed as a new class of economic objects. Unlike traditional software systems, AI agents exhibit heterogeneous capabilities, uncertain performance, adaptive behavior, varying levels of autonomy, and measurable reliability profiles. Unlike human workers, AI agents can often be replicated at near-zero marginal deployment cost and can simultaneously participate in multiple workflows. They therefore occupy a unique position between labor and capital. Existing theories from labor economics, production economics, industrial organization, platform economics, and mechanism design provide important insights, yet none directly address the problems of agent-level valuation, attribution, accountability, and pricing in hybrid human--AI systems. This gap motivates the need for a dedicated economic framework for studying AI agents.

To address these challenges, we introduce \emph{Agentomics}, a framework for studying the economics, valuation, attribution, accountability, and pricing of human and artificial agents. Analogous to labor economics, which studies the allocation and valuation of human labor, Agentomics studies how heterogeneous agents create value, incur costs, generate risks, and interact within organizational workflows. The central premise of Agentomics is that the value of an agent should be determined by its contribution to workflow-level economic outcomes rather than by isolated measures of technical capability.

A key insight underlying Agentomics is that agent value is inherently relational rather than intrinsic. The contribution of an agent cannot generally be determined in isolation because value emerges from interactions among multiple participants. Consequently, AI deployment is modeled as a coalition-formation process in which subsets of AI agents are introduced into existing human-operated workflows. Different coalition structures generate different levels of value, cost, reliability, and operational risk. This perspective naturally leads to a cooperative-game-theoretic representation in which workflow outcomes are viewed as the result of interactions among participating agents.

Building upon this perspective, the paper develops a unified theory of valuation, attribution, accountability, and pricing. We first introduce a workflow-centric economic model that integrates productivity, deployment costs, workflow reliability, and failure losses into a unified measure of net workflow value. We then formulate hybrid human--AI workflows as cooperative games and define a coalition-value function that measures the incremental economic surplus generated by AI deployment relative to a benchmark human workflow. Agent value is subsequently defined through the Shapley value~\cite{Shapley1953}, which quantifies the expected marginal contribution of an agent across all possible deployment configurations. This formulation provides a rigorous mechanism for attributing both value creation and value destruction while accounting for complementarities, substitution effects, and workflow interactions.

The framework further establishes a direct connection among economic value, accountability, and market pricing. We introduce the notion of a \emph{Shapley Pricing Equilibrium}, under which the market price of an agent equals its expected marginal contribution to workflow surplus. This equilibrium concept provides a normative benchmark for evaluating whether agents are overvalued or undervalued relative to their economic contribution. More broadly, the same attribution mechanism that supports economic valuation also provides foundations for governance, auditing, responsibility attribution, and accountability in hybrid human--AI systems.

The contributions of this paper are fourfold. First, we introduce Agentomics as a new framework for studying AI agents as economic actors whose value is determined by their contribution to organizational outcomes. Second, we develop a workflow-centric theory of agent valuation that integrates productivity, deployment costs, reliability, and failure losses into a unified measure of economic surplus. Third, we formulate value attribution in hybrid human--AI systems as a cooperative game and derive Shapley-value-based measures of agent contribution. Fourth, we establish a direct connection among valuation, accountability, and market pricing through the notion of a Shapley Pricing Equilibrium, thereby providing a theoretical foundation for emerging markets in agentic AI services.

Collectively, these contributions establish a unified framework connecting workflow economics, reliability analysis, cooperative game theory, accountability, value attribution, and market pricing. Rather than evaluating AI systems solely according to benchmark accuracy or task-completion rates, Agentomics evaluates agents according to the economic surplus they generate within organizational workflows. As AI agents become increasingly integrated into organizational decision-making processes, understanding how such agents should be valued, deployed, governed, and priced will become a central challenge for researchers, practitioners, and policymakers alike.

The remainder of the paper is organized as follows. Section~\ref{sec:economic-foundations} develops the economic foundations of Agentomics. Section~\ref{sec:mathematical-framework} introduces the workflow-based Agentomics framework and formalizes the deployment of human and AI agents. Section~\ref{sec:agent-characteristics} develops the economic characterization of agents and derives measures of cost, performance, and value. Section~\ref{sec:workflow-economics} introduces workflow-level economics and derives measures of reliability, value, cost, and net economic surplus. Section~\ref{sec:coalitions} develops the coalition-based valuation framework and derives Shapley-based measures of agent value and pricing. Section~\ref{sec:case-study} illustrates the framework in a security-operations setting, and Section~\ref{sec:conclusion} concludes.

\section{Economic Foundations of Agentomics}
\label{sec:economic-foundations}

The emergence of foundation models, autonomous AI systems, and agentic workflows has created a new economic environment in which cognitive capabilities can be developed, purchased, allocated, and traded as productive resources. Unlike traditional software systems that automate predefined procedures, AI agents increasingly perform tasks involving perception, reasoning, planning, decision support, coordination, and autonomous action. As these systems become embedded within organizational workflows, they begin to function as economic actors whose deployment decisions influence productivity, costs, risks, and organizational outcomes.

The economic significance of AI agents is reflected in forecasts of rapid market expansion. Industry analyses project that the global market for AI agents may grow from approximately \$7.84 billion in 2025 to \$52.62 billion by 2030~\cite{MarketsandMarkets2025}. At the same time, advances in generative AI have demonstrated measurable productivity gains across a range of knowledge-intensive tasks. Brynjolfsson, Li, and Raymond report productivity improvements of approximately 14--15\% among customer-support workers using generative AI assistance, with gains exceeding 30\% for less experienced employees~\cite{Brynjolfsson2023}. Similarly, Noy and Zhang show that professionals using ChatGPT complete writing-intensive tasks approximately 40\% faster while improving output quality by nearly 18\%~\cite{Noy2023}. Together, these findings suggest that AI agents are evolving from experimental technologies into economically significant productive assets.

At the same time, organizations increasingly face decisions regarding which tasks should remain under human control and which should be delegated to artificial agents. These decisions depend not only on technical capability but also on deployment costs, operational risks, reliability, governance requirements, accountability obligations, latency, and scalability. Recent estimates suggest that generative AI technologies may ultimately contribute between \$2.6 trillion and \$4.4 trillion annually to the global economy~\cite{McKinsey2023}. Given the magnitude of these potential gains, even modest improvements in deployment efficiency, pricing, and allocation can have substantial economic consequences.

Traditional economic frameworks provide only a partial understanding of this phenomenon. Labor economics studies the allocation and valuation of human workers, while industrial organization examines firms, platforms, and markets. However, AI agents possess characteristics that distinguish them from both conventional software and human labor. They exhibit heterogeneous capabilities, varying levels of autonomy, explicit operational costs, measurable reliability profiles, and increasingly observable market prices. Furthermore, unlike human workers, AI agents can often be replicated at near-zero marginal deployment cost while simultaneously introducing new risks associated with reliability failures, governance requirements, and accountability concerns. These characteristics motivate the need for a dedicated economic framework for understanding the valuation, allocation, and pricing of AI agents.

\subsection{AI Agents as Economic Actors}

An AI agent is an autonomous or semi-autonomous software entity that perceives information, reasons over context, and takes actions in pursuit of specified objectives. Unlike traditional AI tools that merely provide recommendations or predictions, agents can interact with external systems, invoke tools, maintain memory, execute workflows, and adapt their behavior to changing environments. Modern agent architectures range from simple reactive agents to sophisticated planning systems and multi-agent organizations capable of coordinating complex tasks.

From an economic perspective, AI agents may be viewed as productive assets. They consume resources, generate outputs, create value, and impose costs. Like human workers, agents possess heterogeneous capabilities that influence their effectiveness across different tasks. Some agents specialize in narrow domains such as customer support, legal document review, software development, or cybersecurity monitoring, while others operate as general-purpose assistants capable of supporting a broad range of functions. Consequently, agent deployment can be viewed as a resource-allocation problem in which organizations seek to match capabilities to tasks in order to maximize value creation.

A distinguishing characteristic of AI agents is that they increasingly participate in explicit markets. Organizations can purchase access to agent capabilities through subscriptions, APIs, cloud marketplaces, licensing agreements, and dedicated agent stores. Examples include OpenAI's GPT Store, Anthropic's Claude Skills ecosystem, and cloud marketplaces that support AI-agent listings~\cite{OpenAIStore2024,AnthropicSkills2026,GoogleMarketplace2026}. The existence of observable market prices transforms agent deployment into an economic decision in which organizations compare the expected value generated by an agent against its acquisition and operational costs.

The emergence of agent marketplaces further suggests that AI agents are evolving into economic goods. Similar to labor markets, where wages reflect the perceived value of human capabilities, agent markets provide mechanisms through which the value of computational capabilities can be expressed through prices. However, unlike labor markets, current pricing mechanisms remain largely heuristic, often based on subscriptions, token consumption, or platform-specific pricing policies. A central objective of Agentomics is therefore to establish a principled connection between agent capabilities, economic value, and market prices.

\subsection{Cost Structure of AI Agents}

A common misconception is that the economics of AI agents are primarily determined by model inference costs. In practice, agent deployment involves a multi-layered cost structure extending far beyond API charges and token consumption. Total cost of ownership includes development expenditures, computational resources, data acquisition and maintenance, infrastructure provisioning, monitoring, governance, compliance, security, and human oversight.

Development costs encompass requirements analysis, prompt engineering, workflow design, software integration, testing, and deployment. Depending on complexity, reported development costs range from several thousand dollars for simple agents to hundreds of thousands of dollars for enterprise-scale systems. These costs are often amplified in domain-specific applications requiring specialized knowledge, proprietary data, or regulatory compliance.

Operational costs include cloud computing resources, GPU utilization, storage, networking, and API expenditures. Although token-based billing receives significant attention, deployed agentic systems also require integration, monitoring, maintenance, governance, and human support. Consequently, evaluating agent economics solely through inference costs can substantially underestimate deployment expenses.

Data-related costs constitute another important component of agent economics. Fine-tuning, retrieval-augmented generation systems, specialized knowledge bases, and continuous model adaptation require investments in data acquisition, curation, storage, and maintenance. In many enterprise applications, data preparation and governance represent some of the largest expenditures associated with deployment.

Maintenance and monitoring costs persist throughout the agent lifecycle. Deployed systems require performance monitoring, logging, model updates, security reviews, bug fixes, and adaptation to evolving user requirements. Surveys indicate that many production systems continue to rely on human oversight loops, creating additional labor costs that persist even after automation is introduced.

An additional source of cost arises from reliability limitations. Agentic workflows frequently involve multiple sequential decisions in which errors accumulate across stages. Recent studies by METR demonstrate that while frontier AI systems can successfully complete increasingly long tasks, reliability remains highly sensitive to task complexity and workflow length~\cite{METR2025}. Even small stage-level failure probabilities can generate substantial workflow-level risks when decisions are chained together. The resulting need for verification, re-execution, exception handling, and human intervention creates what practitioners increasingly describe as an \emph{unreliability tax}. This observation implies that the economic attractiveness of an agent depends not only on its average performance but also on the costs associated with failures. These considerations motivate the explicit incorporation of reliability and failure losses into the Agentomics framework developed later in this paper.

\subsection{Agent Marketplaces and Pricing Models}

The commercialization of AI agents has produced a diverse collection of pricing mechanisms. Current markets employ subscription pricing, usage-based pricing, outcome-based compensation, revenue-sharing agreements, and hybrid models that combine multiple approaches. The diversity of these mechanisms reflects the absence of a universally accepted theory of agent valuation.

Subscription models charge fixed monthly or annual fees for access to agent services. Usage-based models charge according to API calls, token consumption, interactions, or completed tasks. Outcome-based models tie compensation directly to realized value, such as generated sales, cost reductions, or productivity improvements. Hybrid models increasingly combine subscriptions with usage-based overage charges.

The emergence of agent marketplaces has further complicated pricing decisions. Platforms must determine how to distribute revenue among infrastructure providers, marketplace operators, and agent developers. At the same time, developers must choose among competing marketplaces that differ in fees, visibility, distribution reach, and governance policies. These interactions create a two-sided market in which platform design influences both agent supply and user demand. More generally, distributed agentic workflows raise incentive-compatibility questions because agents may be supplied by different providers, participate in different teams, and require mechanisms that align local incentives with workflow-level objectives~\cite{YangZhu2026InternetAgenticAI}.

Economic theory suggests that agent pricing should ultimately depend on value creation. However, most observed pricing mechanisms rely on subscriptions, token usage, or competitive benchmarking rather than rigorous measures of economic contribution. Consequently, a fundamental question remains unanswered: what should determine the price of an AI agent? This paper addresses that question by linking pricing to the economic surplus generated by agents within workflows. The resulting framework provides a normative benchmark against which observed market prices can be evaluated.

\subsection{Human--AI Substitution and Complementarity}

Public discussions of AI frequently focus on labor substitution. While automation undoubtedly replaces certain tasks, many real-world deployments exhibit a more complex relationship between human and artificial agents. In practice, organizations increasingly employ hybrid human--AI workflows in which humans and agents collaborate to achieve shared objectives. Related work on cyber-physical-human and socio-technical systems emphasizes that system performance depends on coordinated decisions across human, cyber, and physical layers rather than on isolated component behavior~\cite{Huang2020DynamicGames,ZhuBasar2024SocioTechnical}.

Human operators contribute contextual understanding, adaptability, creativity, ethical judgment, and resilience under uncertainty. AI agents contribute scalability, speed, consistency, and low marginal execution costs. The resulting relationship is frequently complementary rather than purely substitutive. For example, cybersecurity analysts increasingly rely on AI agents for threat triage and anomaly detection while retaining responsibility for strategic decisions and incident response. Similarly, physicians may use diagnostic agents to augment decision-making without relinquishing clinical judgment.

These complementarities imply that the value of an AI agent cannot be evaluated independently of its operating environment. An agent's contribution depends not only on its own capabilities but also on its interactions with human operators, upstream information sources, downstream decision-makers, and other AI agents. Consequently, evaluating agents in isolation can produce misleading conclusions regarding their economic value.

This observation motivates the coalition-based perspective adopted in the remainder of the paper. Rather than viewing AI deployment as a binary choice between humans and machines, Agentomics models organizations as collections of interacting human and artificial agents whose joint actions determine workflow outcomes. The economic contribution of any individual agent therefore depends on the broader coalition within which it operates. This perspective naturally leads to cooperative-game-theoretic methods for measuring value creation, attributing contributions, and establishing economically meaningful prices for AI agents.

\subsection{Policy, Governance, and Accountability}

The increasing deployment of autonomous agents raises important questions regarding governance, accountability, and market regulation. As agents gain the ability to negotiate, transact, recommend, and make decisions on behalf of users, traditional notions of responsibility become more difficult to apply. Determining who is accountable when an agent produces harmful outcomes remains a significant challenge for organizations and policymakers. This issue is closely related to strategic trust, cyber-deception, and agentic workflow design models, which show that trust, manipulation, detection, and responsibility must be analyzed as interaction-dependent properties rather than as static attributes of a single participant~\cite{PawlickZhu2017StrategicTrust,GeZhu2024TrustAI,PawlickColbertZhu2019DeceptionSurvey,HuangZhu2021Duplicity,AlBariZhu2025Gestalt,Zhu2025GenerativeConjectural}.

Emerging regulatory frameworks, including the European Union's AI Act and Digital Markets Act, signal growing recognition of these issues~\cite{EUAIAct2024,DMA2022}. These initiatives emphasize transparency, auditability, human oversight, interoperability, and risk management for AI-enabled systems. Simultaneously, competition authorities have expressed concern that foundation-model ecosystems may generate new forms of market concentration and platform gatekeeping~\cite{CMA2023,CMA2024}.

These governance concerns are closely related to the economic questions addressed in this paper. Pricing, accountability, and value attribution are fundamentally connected. If organizations are to rely upon autonomous agents in economically significant decisions, they must possess principled mechanisms for quantifying both the value created by agents and the responsibility associated with their actions. The Agentomics framework developed in the remainder of this paper seeks to provide part of this foundation by connecting workflow economics, reliability analysis, cooperative-game-theoretic attribution, and market pricing within a unified analytical framework.

\section{Mathematical Framework of Agentomics}
\label{sec:mathematical-framework}

The emergence of foundation models, autonomous AI systems, and specialized reasoning agents has created a new economic environment in which cognitive capabilities can be purchased, allocated, and substituted across organizational workflows. As AI agents increasingly perform tasks traditionally carried out by human workers, firms face a new class of resource-allocation problems concerning the deployment of heterogeneous human and artificial agents. These decisions are driven not only by technical performance but also by economic considerations such as cost, reliability, latency, scalability, and operational risk. We refer to the study of these interactions as \emph{Agentomics}, namely the economics of AI agents. Analogous to labor economics, which studies the allocation and valuation of human labor, Agentomics seeks to understand the pricing, allocation, productivity, and substitution of AI agents within production systems.

Consider a workflow consisting of \(K\) ordered stages indexed by \(\mathcal K=\{1,\ldots,K\}\). Throughout the paper, the index \(k\) denotes a stage, and \(\tau_k\) denotes the task executed at stage \(k\). Each stage represents a productive activity that transforms information, generates decisions, or performs actions contributing to a broader organizational objective. The output of stage \(k\) serves as the input to stage \(k+1\), thereby creating a sequence of interdependent decision processes. Such workflows arise naturally in healthcare, cybersecurity, finance, manufacturing, and software engineering. For example, a medical diagnostic workflow may consist of imaging analysis, symptom interpretation, diagnosis generation, and treatment recommendation. Because stages are interconnected, errors, uncertainty, or delays introduced upstream may propagate downstream and affect overall workflow performance. Consequently, organizational outcomes depend not only on the quality of individual agents but also on how agents interact through the workflow structure.

Each stage \(k\in\mathcal K\) is associated with a task \(\tau_k\in\mathcal T\) and a task description \(d_k \in \mathcal D\), where \(\mathcal T\) denotes the task space and \(\mathcal D\) denotes the space of task specifications. The object \(d_k\) represents the requirements of task \(\tau_k\) at stage \(k\). It need not be a numerical vector; it may be a set, a weighted set, a tuple of attributes, or another structured specification. A useful formal interpretation is to let \(\Omega\) denote a universe of requirement and capability attributes, and to view \(d_k\subseteq\Omega\) as the set of attributes required by stage \(k\). These attributes may include domain expertise, contextual reasoning, perception, planning, verification burden, latency tolerance, autonomy requirement, security sensitivity, or consequence severity. Thus \(d_k\) is not merely a text label for the task; it is a formal requirement profile for the stage. For example, in a SOC workflow, alert triage may require log analysis, threat prioritization, low latency, and uncertainty handling, while response recommendation may require causal reasoning, accountability, policy compliance, and human-verifiable justification. The collection \((d_1,\ldots,d_K)\) defines the cognitive architecture of the workflow and determines the capabilities required at each stage. Since stage tasks may differ substantially in complexity, uncertainty, and expertise requirements, the effectiveness of a particular agent depends critically on the degree to which the agent's capabilities cover the requirements of the task \(\tau_k\) assigned to that stage.

The organization has access to a set of agents \(\mathcal A\). We partition the agent space into human and artificial agents such that \(\mathcal A=\mathcal H \cup \mathcal I\) and \(\mathcal H\cap\mathcal I=\emptyset\), where \(\mathcal H\) denotes human agents and \(\mathcal I\) denotes AI agents. Human agents represent the traditional workforce, while AI agents correspond to foundation models, specialized machine-learning systems, autonomous software agents, or other computational decision-making entities. From the perspective of Agentomics, both classes of agents constitute productive resources that can be allocated across workflow stages to generate value.

Each agent \(a \in \mathcal A\) possesses a capability profile \(m_a \in \mathcal M\), where \(\mathcal M\) denotes the capability-profile space. The object \(m_a\) describes what agent \(a\) can provide, and it may also be represented as a subset \(m_a\subseteq\Omega\), a weighted capability set, or a structured profile. For AI agents, \(m_a\) may include model accuracy, domain specialization, tool-use ability, robustness under distribution shift, interpretability, low-latency execution, autonomy, and auditability. For human agents, it may include expertise, judgment, adaptability, institutional knowledge, accountability, and supervisory capacity. Thus \(d_k\) describes what stage \(k\) requires, while \(m_a\) describes what agent \(a\) can provide. As an illustration, a triage agent may have \(m_a=\{\text{log analysis},\text{threat scoring},\text{low latency}\}\), which covers much of \(d_2\) but may cover less of \(d_4\) if the response-recommendation stage requires accountability, policy judgment, and human-verifiable justification. Comparing \(d_k\) and \(m_a\) provides a mechanism for matching agents to workflow requirements and determines the productive contribution of an agent when assigned to stage \(k\).

A distinguishing feature of Agentomics is that AI agents possess explicit market prices. Unlike human labor, which is typically acquired through employment contracts, AI capabilities are increasingly supplied through APIs, subscriptions, licensing agreements, and agent marketplaces. Consequently, each agent is associated not only with a capability profile but also with an economic profile. For agent \(a\), let \(p_a\) denote its market price, \(\ell_a\) its execution latency, and \(r_a\) its operational risk. Stage-specific deployment costs are introduced below as \(c_k(a)\). For human agents, costs may reflect wages, benefits, training expenditures, and opportunity costs. For AI agents, costs may include inference expenses, API charges, licensing fees, computational resources, monitoring costs, and governance overhead. The coexistence of capability and economic attributes transforms agent selection into a resource-allocation problem in which organizations must balance performance against cost and risk.

A workflow configuration is an assignment \(x=(x_1,\ldots,x_K)\in\mathcal X\), where \(x_k \in \mathcal A_k\subseteq\mathcal A\) denotes the agent assigned to stage \(k\), \(\mathcal A_k\) is the feasible agent set for that stage, and \(\mathcal X\subseteq\prod_{k=1}^K\mathcal A_k\) denotes the feasible configuration set. Thus, in the baseline formulation, each stage has one responsible agent selected from its feasible set. The feasible set \(\mathcal A_k\) may contain multiple human candidates and multiple AI candidates capable of performing \(\tau_k\); the framework therefore allows two or more AI agents to be feasible substitutes for a single stage. A configuration therefore specifies a complete realization of the workflow and determines how productive responsibilities are distributed across human and AI resources. Different configurations induce different performance levels, cost structures, reliability characteristics, and patterns of error propagation. The benchmark human workflow is represented by \(x^H=(H_1,\ldots,H_K)\), where \(H_k \in \mathcal H\cap\mathcal A_k\) for all \(k\). The superscript \(H\) in \(x^H\) denotes the all-human benchmark configuration, whereas \(\mathcal H\) denotes the set of human agents. This benchmark corresponds to the incumbent production system and serves as the reference point against which AI-enabled workflows are evaluated.

More generally, configurations may be hybrid, combining both human and AI agents across different stages. Such hybrid arrangements are particularly important because organizations typically adopt AI incrementally rather than replacing entire workflows simultaneously. The resulting design problem is therefore one of determining which stages should remain human-operated and which stages should be delegated to AI systems for execution of their associated tasks. Under the Agentomics perspective, workflow design becomes an economic optimization problem in which the objective is not merely to maximize technical performance but rather to optimize the joint trade-off among productivity, cost, risk, latency, and scalability. The fundamental question is therefore how heterogeneous human and artificial agents should be allocated across a workflow in order to maximize organizational value given their capabilities, prices, and operational characteristics. This question forms the foundation of Agentomics and motivates the analytical models developed in the remainder of the paper.

\section{Economic Characteristics of Agents}
\label{sec:agent-characteristics}

The central problem of Agentomics is to determine how heterogeneous human and artificial agents should be allocated across workflow stages when agents differ both in their capabilities and in their economic characteristics. Once the workflow structure and feasible stage-agent assignments have been specified, the desirability of a workflow configuration depends on two fundamental considerations. The first is the economic cost associated with acquiring and deploying agents throughout the workflow. The second is the performance generated by those agents when executing the task associated with each assigned stage. Together, these quantities determine the value created by a workflow configuration and provide the basis for economic comparison among alternative allocations of human and AI resources.

A distinguishing feature of modern AI systems is that they are increasingly traded as market goods. Access to foundation models, reasoning agents, and specialized AI services is commonly obtained through subscription contracts, usage-based pricing schemes, enterprise licenses, or agent marketplaces. Consequently, AI agents possess explicit prices that can be compared directly against the costs of human labor. Agentomics seeks to understand how these prices interact with agent capabilities to influence organizational decisions regarding automation, workforce composition, and workflow design.

\subsection{Agent Pricing and Deployment Costs}

Each agent \(a \in \mathcal A\) is associated with a market price \(p_a \geq 0\). For human agents, \(p_a\) may represent wages, compensation, benefits, training expenditures, and opportunity costs. For AI agents, \(p_a\) may represent licensing fees, subscription charges, API costs, or other expenditures required to access and utilize the agent.

In addition to acquisition costs, organizations incur operational expenses when deploying agents within workflows. Let \(o_k(a)\) denote the operational cost associated with assigning agent \(a\) to stage \(k\), whose task is \(\tau_k\). Depending on the application domain, operational costs may include computational expenditure, infrastructure utilization, monitoring overhead, supervisory effort, verification requirements, communication costs, energy consumption, or latency-related penalties. The total economic cost of assigning agent \(a\) to stage \(k\) is therefore represented by \(c_k(a)=p_a+o_k(a)\), where \(c_k(a)\ge 0\).

More generally, operational costs depend not only on the characteristics of the agent but also on the degree to which agent capabilities cover the requirements of the task executed at a stage. Agents that are poorly matched to a stage often require additional supervision, correction, or re-execution, thereby increasing deployment costs. Let \(\mu(d_k,m_a)\in[0,1]\) denote a coverage measure between the task-description object \(d_k\) for stage \(k\) and the capability profile \(m_a\) of agent \(a\). A larger value of \(\mu(d_k,m_a)\) indicates that the agent covers a larger portion of the stage requirements. If \(d_k,m_a\subseteq\Omega\) are sets of required and supplied attributes, a canonical specification is \(\mu(d_k,m_a)=\nu(d_k\cap m_a)/\nu(d_k)\), where \(\nu\) is a nonnegative measure or importance weight on \(\Omega\) and \(\nu(d_k)>0\). This value is one when the agent covers all required attributes and zero when it covers none. Weighted numerical vectors are a special case: if \(d_k,m_a\in[0,1]^J\), then \(\mu(d_k,m_a)=\sum_{j=1}^{J}w_j\min\{d_{kj},m_{aj}\}/\sum_{j=1}^{J}w_jd_{kj}\). To capture this dependence, we define
\begin{equation}
\label{eq:stage-cost}
c_k(a)
=
c\!\left(a,\tau_k,\mu(d_k,m_a)\right),
\end{equation}
\noindent where \(c:\mathcal A\times\mathcal T\times[0,1]\rightarrow\mathbb R_+\) is a stage-cost model and \(\mu:\mathcal D\times\mathcal M\rightarrow[0,1]\) is the coverage measure from agent capabilities to task requirements. 
The collection of stage-dependent cost functions \(\{c_k\}_{k=1}^{K}\) characterizes the economic burden associated with executing a workflow under a given allocation of agents.

\subsection{Agent Performance}

Agent performance is modeled through a general map from agents, stage tasks, and coverage levels to success probabilities:
\begin{equation}
\label{eq:performance-map}
\rho :
\mathcal A
\times
\mathcal T
\times
[0,1]
\rightarrow
[0,1]
\end{equation}
\noindent For agent \(a\), task \(\tau_k\), and coverage level \(\mu(d_k,m_a)\), the quantity \(\rho(a,\tau_k,\mu(d_k,m_a))\) gives the probability that agent \(a\) successfully executes task \(\tau_k\) at stage \(k\). This map permits performance to depend on intrinsic agent characteristics, task identity, coverage of task requirements, contextual uncertainty, environmental conditions, domain-specific execution models, and other operational factors.

The stage-specific performance function \(\rho_k:\mathcal A_k\rightarrow[0,1]\) is obtained by evaluating the general performance map at the task associated with stage \(k\):
\begin{equation}
\label{eq:stage-performance-from-map}
\rho_k(a)
=
\rho\!\left(a,\tau_k,\mu(d_k,m_a)\right).
\end{equation}
\noindent Thus \(\rho\) is the general performance model, whereas \(\rho_k(a)\) is the success probability of assigning agent \(a\) to the particular stage \(k\). If \(E_k(a)\) denotes the event that task \(\tau_k\) is completed correctly by agent \(a\), then
\begin{equation}
\label{eq:stage-performance}
\rho_k(a)
=
\mathbb P\!\bigl(E_k(a)\bigr).
\end{equation}

\noindent The quantity \(\rho_k(a)\) measures the effectiveness of agent \(a\) when performing task \(\tau_k\) at stage \(k\). Successful execution means that the output generated at stage \(k\) satisfies the semantic and operational requirements necessary for downstream workflow processing.

Performance depends jointly on intrinsic agent quality and compatibility with the task \(\tau_k\) associated with stage \(k\). Even highly capable agents may exhibit degraded performance when deployed outside their intended domain, whereas strong alignment between task requirements and agent capabilities can significantly improve outcomes. In healthcare applications, for example, an imaging-analysis model may achieve high performance at an image-analysis stage while exhibiting substantially lower performance at diagnosis-generation or treatment-planning stages that require contextual reasoning and clinical judgment.

A simple and interpretable specification is
\begin{equation}
\label{eq:compatibility-performance}
\rho_k(a)
=
\rho_a^{0}\,\mu(d_k,m_a),
\end{equation}
\noindent where \(\rho_a^{0}\in[0,1]\) denotes the baseline performance level of agent \(a\). Under this special model, the general map satisfies \(\rho(a,\tau_k,z)=\rho_a^{0}z\) for coverage level \(z\). Hence \(\rho_a^{0}\) represents the probability that the agent successfully performs a task for which it is fully qualified, while the coverage measure \(\mu(d_k,m_a)\) adjusts performance according to the requirements of task \(\tau_k\) at stage \(k\). This multiplicative specification is only one convenient model; the general map \(\rho\) may be nonlinear, task-dependent, context-dependent, or estimated empirically.

\subsection{Agent Value}

Cost and performance are not ends in themselves; rather, they jointly determine the economic value generated by an agent. The central object of Agentomics is therefore the relationship between the value created by an agent and the resources required to deploy it.

The local expected value generated at stage \(k\) is represented by a stage-value function \(v_k:\mathcal A_k\rightarrow\mathbb R\), where \(v_k(a)\) denotes the value generated when agent \(a\) executes task \(\tau_k\). Depending on the application, value may correspond to revenue generation, productivity gains, quality improvements, risk reduction, mission effectiveness, or other organizational objectives. The function \(v_k\) may already incorporate the success probability \(\rho_k(a)\), the quality of the produced output, and any stage-local benefits or penalties. It is used only as a local or additive special case; the general workflow-value function introduced later need not decompose into stage-level values.

The economic attractiveness of an agent is determined by comparing generated value with incurred cost. A useful measure of economic efficiency is

\begin{equation}
\label{eq:stage-efficiency}
\eta_k(a)
=
\frac{v_k(a)}{c_k(a)},
\end{equation}

\noindent which is well defined whenever \(c_k(a)>0\) and may be interpreted as the expected value generated per unit cost. Agents with larger values of \(\eta_k(a)\) provide greater economic return on deployed resources, although this ratio should be interpreted together with workflow-level reliability and downstream interaction effects rather than as a complete deployment criterion.

This perspective highlights a fundamental distinction between technical performance and economic desirability. An agent with the highest reliability is not necessarily the economically optimal choice if it is substantially more expensive than competing alternatives. Conversely, a lower-cost agent may be preferred if the resulting reduction in performance is outweighed by the associated cost savings.

\subsection{Human--AI Economic Tradeoffs}

The preceding framework captures the economic tradeoffs that arise when allocating human and AI agents across workflows. Human agents often possess superior contextual reasoning, adaptability, creativity, and robustness under ambiguity. These attributes are particularly valuable in environments characterized by incomplete information, novel situations, or complex judgment. However, human labor is frequently associated with higher costs, limited scalability, and slower execution speeds.

AI agents, by contrast, offer rapid inference, low marginal execution costs, and large-scale automation. These characteristics make them attractive for repetitive, high-volume, and computationally intensive tasks. Nevertheless, AI systems may exhibit reduced robustness under distributional shift, rare edge cases, adversarial conditions, or semantically mismatched assignments.

As a consequence, neither class of agents is universally superior. The relative economic value of an agent depends on the interaction among its capabilities, its price, the requirements of the assigned task, and its contribution to downstream workflow outcomes. Agentomics therefore evaluates agents not in isolation but according to their impact on end-to-end organizational performance.

The framework shifts the focus from component-level benchmarking to system-level economic optimization. The objective is not merely to identify the most capable agents, but rather to determine the allocation of human and artificial agents that maximizes organizational value while accounting for costs, risks, performance, and operational constraints. This allocation problem constitutes the central optimization challenge studied in Agentomics.

\section{Workflow Economics}
\label{sec:workflow-economics}

Having defined the economic and performance characteristics of individual agents, we now evaluate entire workflow configurations. Because workflow stages are sequentially interconnected, the economic performance of a workflow depends not only on the capabilities and costs of individual agents but also on how their contributions interact throughout the production process. Agentomics therefore evaluates workflow configurations using system-level measures of generated value, deployment cost, and execution risk. Throughout this section, let \(x=(x_1,\ldots,x_K)\in\mathcal X\) denote a feasible workflow configuration as defined in Section~\ref{sec:mathematical-framework}.

\subsection{Workflow Reliability}

For each stage \(k\), let \(E_k(x_k)\) denote the event that task \(\tau_k\) is successfully executed by the assigned agent \(x_k\). Since the workflow is sequential, successful completion of the entire workflow requires successful execution at every stage. The workflow success event is therefore

\begin{equation}
\label{eq:workflow-success-event}
E(x)
=
\bigcap_{k=1}^{K} E_k(x_k).
\end{equation}

\noindent Assuming stage-execution outcomes are conditionally independent given the workflow configuration \(x\), the reliability of the workflow is

\begin{equation}
\label{eq:workflow-reliability}
R(x)
=
\mathbb P(E(x))
=
\prod_{k=1}^{K}\rho_k(x_k),
\end{equation}

\noindent where \(\rho_k(x_k)\) is the stage-specific success probability induced by the general performance map \(\rho\) in Section~\ref{sec:agent-characteristics}; equivalently, \(\rho_k(x_k)=\rho(x_k,\tau_k,\mu(d_k,m_{x_k}))\). The corresponding workflow failure probability is

\begin{equation}
\label{eq:workflow-failure-probability}
q(x)
=
1-R(x).
\end{equation}

\noindent The multiplicative structure reflects the cumulative nature of reliability in sequential production systems. Since successful workflow completion requires success at every stage, a single low-performance assignment can substantially reduce end-to-end performance. Consequently, workflow outcomes are often strongly influenced by bottleneck stages whose failure probabilities dominate overall execution risk. Although conditional independence provides a tractable baseline, the framework can be extended to incorporate explicit error-propagation mechanisms, correlated failures, and inter-stage dependencies, as in cross-layer models of secure and resilient cyber-physical systems~\cite{ZhuBasar2015,ZhuXu2020CrossLayer,Zhu2024CyberResilience}.

\subsection{Workflow Value}

The economic value generated by a workflow is generally a property of the entire team and its induced configuration, rather than a sum of isolated stage-level values. We define the gross workflow-value function as

\begin{equation}
\label{eq:workflow-value}
V:\mathcal X\rightarrow\mathbb R,
\end{equation}

\noindent where \(V(x)\) represents the expected value generated by configuration \(x\) before subtracting deployment costs and failure losses. This quantity captures the productive contribution of the team participating in the workflow. Depending on the application domain, workflow value may represent revenue generation, productivity gains, mission effectiveness, quality improvements, risk reduction, or other organizational objectives. The function \(V\) may encode complementarities, substitution effects, bottlenecks, network effects, nonlinear production, and other interactions among workflow stages.

An additive stage-value model is a useful special case. If \(v_k(x_k)\) denotes the expected value generated locally at stage \(k\), then one may set \(V(x)=\sum_{k=1}^{K}v_k(x_k)\). The framework does not require this separability assumption.

\subsection{Workflow Cost}

The total economic cost of a workflow configuration is obtained by aggregating stage-level deployment costs. Specifically,

\begin{equation}
\label{eq:workflow-cost}
C(x)
=
\sum_{k=1}^{K} c_k(x_k).
\end{equation}

\noindent This quantity includes all expenditures associated with acquiring and deploying agents, including labor costs, licensing fees, computational expenses, infrastructure utilization, monitoring requirements, and supervisory overhead. In many applications, workflow failure generates additional economic consequences beyond direct execution costs. Let \(L_F>0\) denote the loss incurred when the workflow fails. This loss may represent financial damages, regulatory penalties, safety consequences, reputational harm, or operational disruptions. The expected failure loss is therefore

\begin{equation}
\label{eq:workflow-failure-loss}
L(x)
=
L_F\,q(x)
=
L_F\left(1-R(x)\right).
\end{equation}

\noindent Substituting the reliability expression yields

\begin{equation}
\label{eq:workflow-failure-loss-expanded}
L(x)
=
L_F
\left(
1-
\prod_{k=1}^{K}
\rho_k(x_k)
\right).
\end{equation}

\subsection{Net Workflow Value}

The central object of Agentomics is the net economic value generated by a workflow after accounting for deployment costs and execution risk. We therefore define the net workflow value as

\begin{equation}
\label{eq:workflow-net-value}
W(x)
=
V(x)
-
C(x)
-
L(x).
\end{equation}

\noindent Substituting the preceding definitions yields

\begin{equation}
\label{eq:workflow-net-value-expanded}
W(x)
=
V(x)
-
\sum_{k=1}^{K}
c_k(x_k)
-
L_F
\left(
1-
\prod_{k=1}^{K}
\rho_k(x_k)
\right).
\end{equation}

\noindent The quantity \(W(x)\) represents the expected economic surplus generated by workflow configuration \(x\). Configurations producing greater team-level value, lower deployment costs, and higher reliability achieve larger values of \(W(x)\). If the special additive model \(V(x)=\sum_{k=1}^{K}v_k(x_k)\) is adopted, then \eqref{eq:workflow-net-value-expanded} reduces to the corresponding stage-sum expression.

\subsection{Economic Interpretation}

The workflow designer faces a resource-allocation problem in which human and AI agents must be assigned to stages, each with its own task \(\tau_k\), so as to maximize organizational value. The economic attractiveness of AI deployment therefore depends not only on whether an AI agent can perform the task associated with a stage, but also on whether the resulting increase in value exceeds any associated deployment costs and reliability losses.

In high-consequence domains such as healthcare, cybersecurity, critical infrastructure, and autonomous systems, large failure losses may justify extensive human oversight despite higher labor costs. Conversely, in environments characterized by low failure penalties and large-scale repetitive workloads, AI-based automation may generate substantial economic gains.

The Agentomics perspective therefore shifts attention from component-level benchmarking to system-level value creation. The relevant question is not whether a particular AI agent outperforms a human agent on an isolated task, but whether a given allocation of human and artificial agents maximizes the net economic value generated by the workflow as a whole.

\section{Human--AI Coalitions and Hybrid Workflows}
\label{sec:coalitions}

The preceding sections introduced workflow configurations together with the economic and performance characteristics of individual agents. We now extend the framework to study the incremental deployment of AI agents into an existing human-operated workflow. Rather than viewing automation as an all-or-nothing decision, Agentomics models deployment as a coalition-formation process in which different subsets of AI agents participate in workflow execution.

This perspective is important because the economic contribution of an AI agent is inherently system dependent. In sequential workflows, the value generated by an agent depends not only on its own capabilities but also on its interactions with upstream information, downstream decision-making processes, human supervision, and the reliability characteristics of other agents in the workflow. Consequently, an AI system that appears attractive when evaluated in isolation may fail to improve overall workflow performance once coordination effects, reliability losses, and economic risk are taken into account.

The objective of this section is to formalize hybrid human--AI workflows as coalition-induced configurations and to establish a rigorous notion of agent value. This formulation naturally leads to a cooperative-game representation in which the economic gains generated by automation can be attributed among participating AI agents. The resulting attribution provides the foundation for agent valuation and pricing within the Agentomics framework.

\subsection{Coalitional Structure}

Consider a workflow consisting of \(K\) sequential tasks \(\tau_1 \rightarrow \tau_2 \rightarrow \cdots \rightarrow \tau_K\). 
The benchmark human workflow configuration is denoted by \(x^H=(H_1,\ldots,H_K)\), where \(H_k\in\mathcal H\) is the human operator assigned to stage \(k\). This benchmark represents the incumbent workflow against which all AI-enabled configurations are evaluated.

Suppose that the workflow operator has access to a collection of deployable AI agents \(N=\{A_1,\ldots,A_n\}\subseteq\mathcal I\). A coalition \(S\subseteq N\) represents the subset of AI agents activated within the workflow. The empty coalition \(S=\emptyset\) corresponds to the benchmark human workflow, while the grand coalition \(S=N\) corresponds to maximal AI deployment. Intermediate coalitions induce hybrid human--AI workflows and therefore capture the gradual adoption of AI systems within operational environments.

The coalition perspective recognizes that workflow performance is fundamentally a system-level property. The economic value generated by an AI agent depends on how its outputs interact with the remainder of the workflow, implying that agent contributions are generally coalition dependent.

\subsection{Coalition-Induced Workflow Configurations}

Each coalition \(S\subseteq N\) induces a workflow configuration \(x^S=(x_1^S,\ldots,x_K^S)\), where \(x_k^S\in\mathcal A_k\) denotes the agent assigned to stage \(k\). 
The deployment mapping \(\Gamma:2^N \rightarrow \mathcal X\) satisfies \(x^S=\Gamma(S)\). 
The mapping \(\Gamma\) specifies how activated AI agents are allocated across workflow stages. It encodes the feasibility constraints of the deployment problem, including cases in which multiple AI agents are feasible candidates for the same stage and the operator must select among them. If no AI agent in \(S\) is assigned to stage \(k\), then the benchmark human operator \(H_k\) remains responsible for executing \(\tau_k\). Consequently, each coalition induces a distinct workflow architecture with its own cost structure, reliability profile, information flow, and operational characteristics.

The benchmark workflow is recovered as \(x^\emptyset=x^H\), while \(x^N\) denotes the workflow obtained under full AI deployment. These two configurations are useful reference points: \(x^\emptyset\) anchors the coalition game at the incumbent human workflow, whereas \(x^N\) represents the maximal deployment scenario against which total AI-enabled surplus is measured.

\subsection{Workflow Economics Under Coalition Formation}

Given a coalition \(S\subseteq N\), its induced configuration \(x^S\) is evaluated using the workflow-level quantities in Section~\ref{sec:workflow-economics}. In particular, \(V(x^S)\) denotes the gross value generated by the entire team under configuration \(x^S\), the reliability is obtained from \eqref{eq:workflow-reliability} as \(R(x^S)=\prod_{k=1}^{K}\rho_k(x_k^S)\), the failure probability is \(q(x^S)=1-R(x^S)\), and the net workflow value is \(W(x^S)=V(x^S)-C(x^S)-L(x^S)\). Thus \(W(x^S)\) represents the expected economic surplus generated by coalition \(S\). No additive decomposition of \(V(x^S)\) across stages is assumed.

\subsection{Coalition Value}

The economic value generated by AI deployment is measured relative to the benchmark human workflow. This normalization ensures that coalition values represent incremental surplus rather than absolute workflow output, which is important when comparing agents across workflows with different baseline productivity levels.

\begin{definition}[Coalition Value]
The coalition-value function \(g:2^N\rightarrow\mathbb R\) is defined by

\begin{equation}
\label{eq:coalition-value}
g(S)
=
W(x^S)-W(x^H).
\end{equation}
\end{definition}

\noindent The quantity \(g(S)\) measures the incremental economic surplus generated by coalition \(S\) relative to the benchmark human workflow. 
A coalition is economically beneficial whenever \(g(S)\ge 0\). The set of economically admissible coalitions is 
\(
\mathcal C
=
\left\{
S\subseteq N:
g(S)\ge 0
\right\}.
\)

\begin{proposition}
A coalition \(S\) is economically admissible if and only if \(W(x^S)\ge W(x^H)\).
\end{proposition}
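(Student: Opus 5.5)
The plan is to unfold the two definitions involved and observe that the equivalence is an order-preserving translation of the real line. By definition, $S$ is economically admissible precisely when $S\in\mathcal C$, that is, when $g(S)\ge 0$. By the Coalition Value definition \eqref{eq:coalition-value}, $g(S)=W(x^S)-W(x^H)$. So admissibility of $S$ is equivalent to
\[
W(x^S)-W(x^H)\ge 0 .
\]

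The key step is to add the constant $W(x^H)$ to both sides of this inequality. Adding a fixed real number preserves the order on $\mathbb R$, and subtracting it reverses the operation. This gives the equivalence
\[
W(x^S)-W(x^H)\ge 0 \iff W(x^S)\ge W(x^H).
\]
Both directions therefore follow at once from a single chain of equivalences, so no separate forward and backward arguments are needed.

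The only point needing care, and it is minor, is that $W(x^S)$ and $W(x^H)$ must be well-defined finite reals; otherwise the subtraction and its cancellation are not legitimate. This holds because each $W$ in \eqref{eq:workflow-net-value-expanded} is assembled from finitely many finite quantities. These are $V:\mathcal X\to\mathbb R$, the nonnegative finite costs $c_k$, and the finite loss $L_F$ multiplied by a probability in $[0,1]$. Here $x^S=\Gamma(S)\in\mathcal X$, and $x^H=x^\emptyset\in\mathcal X$ is the benchmark configuration. No structural assumption on $V$, on $\Gamma$, or on the reliability model is used.
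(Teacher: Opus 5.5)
Your proposal is correct and takes the same approach as the paper, which also just unfolds $g(S)=W(x^S)-W(x^H)$ and rewrites the inequality $g(S)\ge 0$. Your added check that $W(x^S)$ and $W(x^H)$ are finite reals is a slightly more careful touch but changes nothing.
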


\begin{proof}
The result follows immediately from the definition \(g(S)=W(x^S)-W(x^H)\).
\end{proof}

\subsection{Agent Economic Value}

The deployment problem induces the transferable-utility cooperative game \((N,g)\). 
For any coalition \(S\subseteq N\setminus\{i\}\), the marginal contribution of agent \(i\) is \(g(S\cup\{i\})-g(S)\). Since workflow performance depends on interactions among agents, marginal contributions generally depend on coalition composition.

\begin{definition}[Agent Economic Value]
For each AI agent \(A_i\in N\), the economic value of the agent is its Shapley value \(\phi_i(g)\), where
\begin{equation}
\label{eq:shapley-value}
\phi_i(g)
=
\sum_{S\subseteq N\setminus\{i\}}
\frac{|S|!(|N|-|S|-1)!}{|N|!}
\left[
g(S\cup\{i\})-g(S)
\right].
\end{equation}
\end{definition}

\noindent The quantity \(\phi_i(g)\) represents the expected marginal economic surplus generated by agent \(A_i\) across all possible coalition-formation orders. It provides a system-level measure of agent productivity that accounts for workflow interactions, complementarities, and substitution effects.

The Shapley values satisfy 
\(
\sum_{A_i\in N} \phi_i(g)=g(N),
\) 
which implies that the total economic surplus generated by the fully deployed workflow is completely allocated among participating agents.

\begin{proposition}[Sign Interpretation of Agent Value]
The economic value of agent \(A_i\) is denoted by \(\phi_i(g)\).

\begin{enumerate}
\item If \(\phi_i(g)>0\), then agent \(A_i\) increases expected workflow surplus.
\item If \(\phi_i(g)=0\), then agent \(A_i\) is economically neutral.
\item If \(\phi_i(g)<0\), then agent \(A_i\) decreases expected workflow surplus.
\end{enumerate}
\end{proposition}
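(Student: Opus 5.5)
The proposition is essentially interpretive, so the plan is to make precise what ``increases expected workflow surplus'' means and then read the three cases off the definition of \(\phi_i(g)\) in \eqref{eq:shapley-value}. The natural formalization is in terms of average marginal surplus. Let \(\pi\) be a uniformly random ordering of \(N\), and let \(P_i^\pi\) denote the set of agents preceding \(A_i\) in \(\pi\). I would first establish the standard identity
\[
\phi_i(g)=\mathbb E_\pi\!\left[g(P_i^\pi\cup\{i\})-g(P_i^\pi)\right].
\]
This follows by counting. For each \(S\subseteq N\setminus\{i\}\), exactly \(|S|!\,(|N|-|S|-1)!\) of the \(|N|!\) orderings have \(P_i^\pi=S\). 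That count is precisely the weight appearing in \eqref{eq:shapley-value}.

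Next I would translate marginal contributions in \(g\) into changes in workflow surplus. By the definition \eqref{eq:coalition-value}, the benchmark term \(W(x^H)\) cancels:
\[
g(S\cup\{i\})-g(S)=W(x^{S\cup\{i\}})-W(x^S).
\]
This quantity is exactly the change in expected net workflow surplus \eqref{eq:workflow-net-value} caused by activating \(A_i\) on top of coalition \(S\), through the deployment map \(\Gamma\). It follows that \(\phi_i(g)\) equals the expected change in \(W\) when \(A_i\) joins the workflow at a uniformly random stage of the deployment sequence.

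The three cases then follow directly. If \(\phi_i(g)>0\), the expected surplus change from adding \(A_i\) is positive, so \(A_i\) increases expected workflow surplus in this averaged sense. If \(\phi_i(g)=0\), the expected change is zero, so \(A_i\) is neutral. If \(\phi_i(g)<0\), the expected change is negative, so \(A_i\) decreases expected surplus.

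The main obstacle is interpretive, not computational. The statement cannot mean that \(A_i\) raises surplus for every coalition, because marginal contributions may change sign across coalitions (complementarity versus substitution). I would therefore state explicitly that ``increases'' refers to the order-averaged marginal effect given by the identity above. I would also note that the case \(\phi_i(g)=0\) includes null players, for which every marginal contribution vanishes, but is not limited to them. The uniform-weight counting step is the only place where an actual calculation is needed, and it is routine.
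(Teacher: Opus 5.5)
Your proposal is correct and follows the same route as the paper, whose proof simply invokes the random-order interpretation of the Shapley value and reads off the sign of the expected marginal contribution. You add two details the paper leaves implicit: the counting argument behind the weights, and the cancellation of $W(x^H)$ in marginal contributions, which turns changes in $g$ into changes in $W$. Your caveat is also the correct reading: ``increases'' must mean the order-averaged effect, not an increase for every coalition.
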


\begin{proof}
The Shapley value equals the expected marginal contribution of an agent across all coalition-formation orders. Positive values imply positive expected contributions to workflow surplus, zero values imply no expected contribution, and negative values imply negative expected contributions.
\end{proof}

A negative economic value identifies a value-destroying agent. Such an agent may introduce reliability degradation, coordination costs, operational inefficiencies, or other adverse effects that outweigh any local performance gains. Consequently, the deployment of such an agent is not economically justified unless its behavior or operating environment changes.

\subsection{Agent Pricing}

The Agentomics framework distinguishes between the economic value generated by an agent and the market price charged for its deployment. Let \(p_i\ge 0\) denote the market price of agent \(A_i\). The pair \((\phi_i(g),p_i)\) characterizes the economic position of the agent within the marketplace. An agent is economically attractive whenever \(\phi_i(g)>p_i\), meaning that the agent generates more economic surplus than it costs to deploy. Conversely, if \(\phi_i(g)<p_i\), then the deployment cost exceeds the expected value generated by the agent.

\begin{definition}[Shapley Pricing Equilibrium]
An agent market is said to be in Shapley pricing equilibrium if
\begin{equation}
\label{eq:shapley-pricing-equilibrium}
p_i=\phi_i(g)
\end{equation}
for every agent \(A_i\in N\).
\end{definition}

\noindent Under Shapley pricing equilibrium, the price of an agent equals its expected marginal contribution to workflow surplus. This equilibrium should be interpreted as a normative benchmark rather than a prediction that all observed prices will immediately satisfy the condition.

\begin{proposition}[Marginal Productivity Pricing]
In a Shapley pricing equilibrium, every agent is compensated according to its expected contribution to workflow value.
\end{proposition}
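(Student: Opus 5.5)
The plan is to unfold the definitions. Two earlier facts do all the work: the Shapley pricing equilibrium condition \eqref{eq:shapley-pricing-equilibrium}, and the interpretation of \(\phi_i(g)\) as an expected marginal contribution. So the proof is mainly a matter of making ``compensated according to its expected contribution'' precise and identifying it with \(\phi_i(g)\).

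\textbf{Step 1: recast the Shapley value as an expectation.} Fix an agent \(A_i\in N\). Let \(\pi\) be drawn uniformly from the \(|N|!\) orderings of \(N\), and let \(P_i^\pi\) be the set of agents preceding \(A_i\) in \(\pi\). For each \(S\subseteq N\setminus\{i\}\), the event \(P_i^\pi=S\) occurs in exactly \(|S|!\,(|N|-|S|-1)!\) orderings. Grouping the orderings by \(P_i^\pi\) in \eqref{eq:shapley-value} then gives
\[
\phi_i(g)=\mathbb E_\pi\!\left[g\bigl(P_i^\pi\cup\{i\}\bigr)-g\bigl(P_i^\pi\bigr)\right].
\]

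\textbf{Step 2: rewrite the marginal contribution in workflow terms.} By the coalition-value definition \eqref{eq:coalition-value}, the benchmark term \(W(x^H)\) cancels in each difference:
\[
g(S\cup\{i\})-g(S)=W(x^{S\cup\{i\}})-W(x^S).
\]
This is exactly the change in net workflow value \eqref{eq:workflow-net-value} when \(A_i\) is added to coalition \(S\) through the deployment map \(\Gamma\). Hence \(\phi_i(g)\) is the expected incremental net workflow surplus attributable to \(A_i\) under uniformly random deployment orders. That is the formal meaning of ``expected contribution to workflow value'' in the statement.

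\textbf{Step 3: conclude.} In a Shapley pricing equilibrium, \(p_i=\phi_i(g)\) for every \(A_i\in N\). The price of each agent therefore equals the quantity identified in Step~2. If we also want to say that the compensation scheme as a whole exhausts the surplus, we invoke efficiency, \(\sum_{A_i\in N}\phi_i(g)=g(N)\), stated earlier. It gives \(\sum_i p_i=g(N)\), so total payments equal total incremental surplus.

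\textbf{Main obstacle.} The only real difficulty is interpretive. ``Compensated according to'' has to be read as equality of price with expected marginal contribution, and ``expected'' has to be pinned to the uniform distribution over orderings of Step~1. Once these two readings are fixed, the counting identity in Step~1 is the only computation, and it is standard.
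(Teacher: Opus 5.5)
Your proposal is correct and follows the same route as the paper's proof, which simply combines the equilibrium condition \(p_i=\phi_i(g)\) with the reading of the Shapley value as an expected marginal contribution. Your Steps 1 and 2 make that reading explicit: the random-order counting identity and the cancellation of \(W(x^H)\) in marginal differences, both of which the paper takes for granted.
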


\begin{proof}
By definition, \(p_i=\phi_i(g)\). Since the Shapley value equals the expected marginal contribution of agent \(A_i\), the agent's price equals its expected economic contribution.
\end{proof}

More generally, market frictions, bargaining power, informational asymmetries, competition among providers, and strategic pricing behavior may cause prices to deviate from Shapley-based values. The quantity \(p_i-\phi_i(g)\) may therefore be interpreted as a measure of overvaluation or undervaluation relative to the agent's economic contribution.

The Agentomics framework thus establishes a direct connection between workflow economics, cooperative-game-theoretic value attribution, and market pricing. Rather than evaluating AI systems solely through benchmark performance, agents are assessed according to the economic surplus they contribute to end-to-end workflow outcomes and the prices they command in competitive markets.

\section{Case Study: Economic Valuation and Pricing of AI Agents in a Security Operations Center}
\label{sec:case-study}

To illustrate the Agentomics framework and the coalition-based valuation theory developed in Section~\ref{sec:coalitions}, consider a medium-sized enterprise operating a Security Operations Center (SOC) responsible for monitoring network activity and responding to cybersecurity incidents. The objective of the SOC is to identify malicious activity, investigate potential threats, and recommend appropriate response actions while minimizing operational costs and cyber risk. This setting is especially natural for agentic AI because cyber defense increasingly involves strategic interactions among defenders, attackers, automated tools, and human analysts~\cite{Zhu2025LLMAgenticCybersecurity,AlBariZhu2025Gestalt}.

This example demonstrates how workflow economics, coalition formation, Shapley-value attribution, accountability, and pricing interact within a realistic operational environment. More importantly, it illustrates the central claim of Section~\ref{sec:coalitions}: the value of an AI agent depends on its contribution to workflow-level economic outcomes rather than on isolated measures of technical performance.

The incident-handling workflow consists of four sequential stages indexed by \(k=1,\ldots,4\). Stage \(k\) has task \(\tau_k\): \(\tau_1\) is alert detection, \(\tau_2\) alert triage, \(\tau_3\) threat investigation, and \(\tau_4\) incident response recommendation. Historically, each stage is performed by a human analyst, yielding the benchmark workflow \(x^H=(H_1,H_2,H_3,H_4)\). The numerical values in this case study are calibrated illustrative assumptions designed to demonstrate the mechanics of the framework rather than estimates from a particular SOC deployment. Assume that the SOC processes approximately 1,000 alerts per day.

\begin{figure*}[t]
\centering
\includegraphics[width=0.95\textwidth]{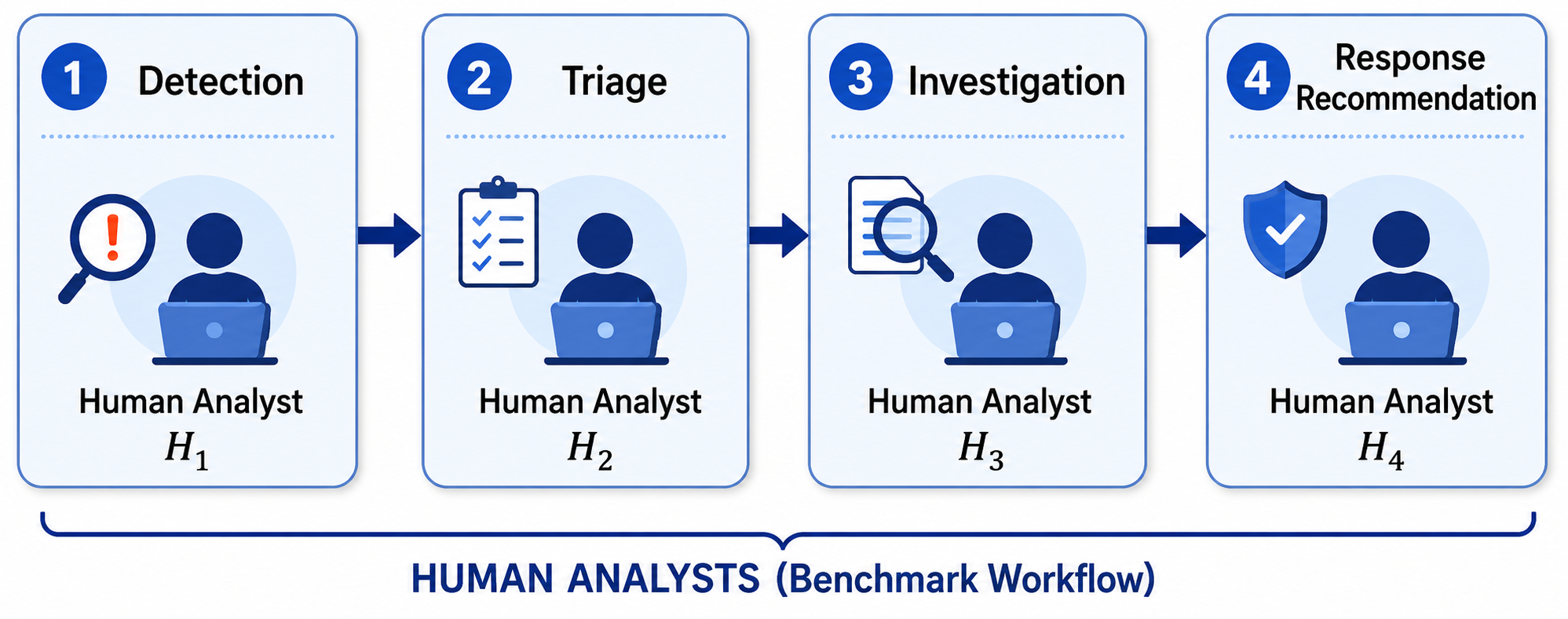}
\caption{
Benchmark human-operated SOC workflow.
Each stage of the incident-handling process is performed by a
human analyst. The benchmark workflow
\(x^H=(H_1,H_2,H_3,H_4)\) serves as the reference configuration
against which all AI-enabled coalition structures are evaluated.
}
\label{fig:soc-human-workflow}
\end{figure*}
Figure~\ref{fig:soc-human-workflow} illustrates the benchmark
workflow in which all four stages are performed by human
analysts. This configuration corresponds to the empty coalition
\(S=\emptyset\) and establishes the baseline economic value
against which alternative human--AI organizational structures
are compared.

\begin{figure*}[t]
\centering
\includegraphics[width=0.95\textwidth]{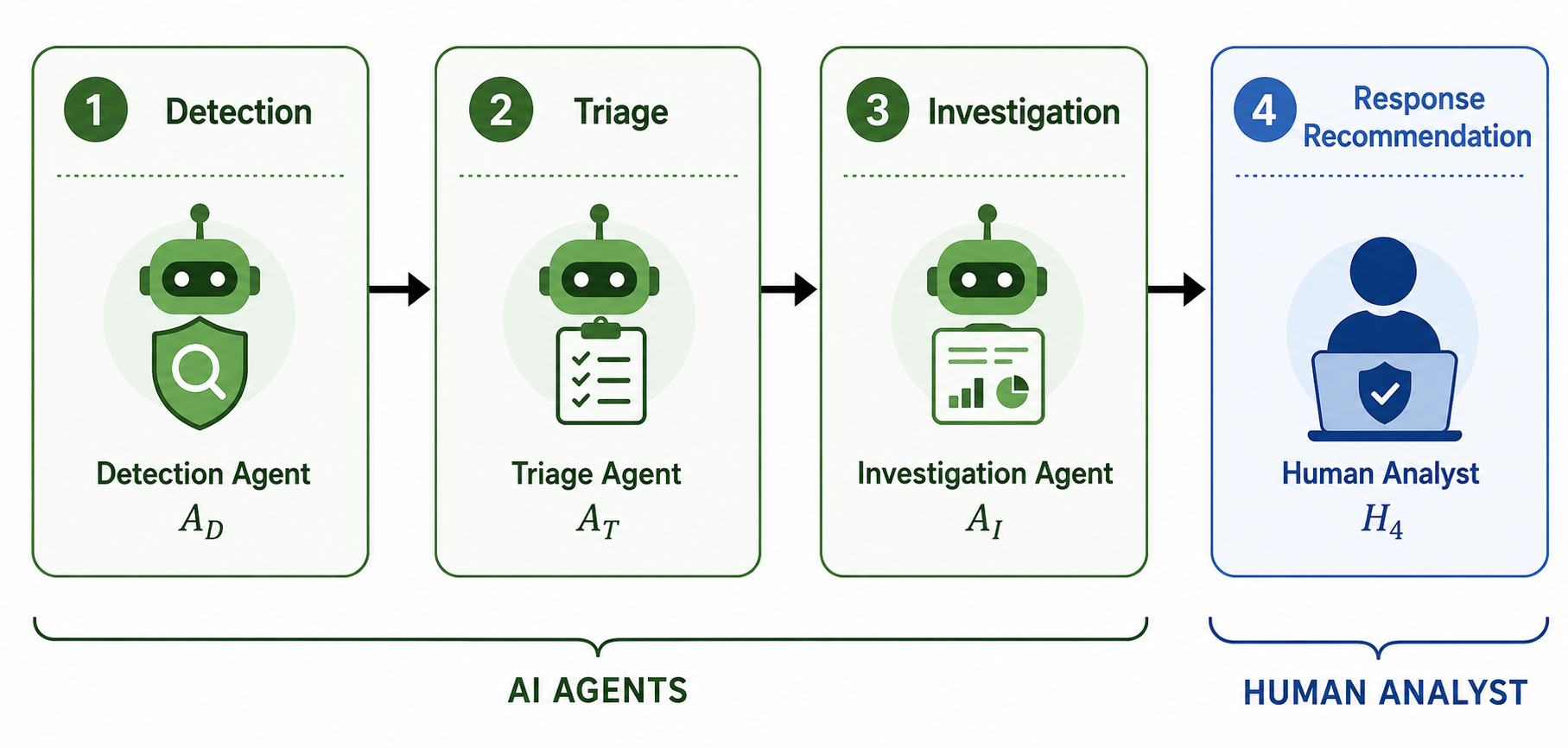}
\caption{
Hybrid human--AI workflow corresponding to the grand coalition.
The Detection Agent (\(A_D\)), Triage Agent (\(A_T\)), and
Investigation Agent (\(A_I\)) automate the first three stages
of the workflow, while a human analyst (\(H_4\)) retains
responsibility for the final response recommendation.
This configuration illustrates how AI agents and humans
cooperate to create workflow value.
}
\label{fig:soc-hybrid-workflow}
\end{figure*}

Figure~\ref{fig:soc-hybrid-workflow} illustrates the grand coalition workflow considered in the case study. The first three stages are automated by AI agents while the final decision remains under human supervision. This structure reflects a common deployment pattern observed in operational cybersecurity environments, where AI systems perform information processing and prioritization tasks while human analysts retain responsibility for high-consequence decisions. From the perspective of Agentomics, the transition from Figure~\ref{fig:soc-human-workflow} to Figure~\ref{fig:soc-hybrid-workflow} represents a coalition formation process. The resulting economic surplus generated by automation is measured through the coalition-value function and allocated among participating agents using the Shapley-value framework developed in Section~\ref{sec:coalitions}.

The benchmark workflow shown in Figure~\ref{fig:soc-human-workflow} produces a net workflow value of \(W(x^H)=\$2{,}845\). The hybrid workflow shown in Figure~\ref{fig:soc-hybrid-workflow} produces \(W(x^{\{A_D,A_T,A_I\}})=\$4{,}560\). By the coalition-value definition in \eqref{eq:coalition-value}, the grand coalition generates \(g(\{A_D,A_T,A_I\})=4{,}560-2{,}845=\$1{,}715\). This surplus represents the additional economic value created by the coalition of AI agents relative to the benchmark human-operated workflow.

\subsection{Human-Only Workflow Economics}

As an external salary anchor, the U.S. Bureau of Labor Statistics reports median annual pay for information security analysts of roughly \$125,000 in 2024~\cite{BLSInfoSec2025}. We use a rounded salary benchmark of \$120,000 for a Tier-2 SOC analyst and assume a 1.5 overhead multiplier to account for benefits, training, management, equipment, office space, and infrastructure support. This yields a fully burdened annual cost of approximately \$180,000 per analyst.

Assuming 250 working days per year, the effective labor cost is \(c_H=180{,}000/250=\$720\) per analyst per day. Since the benchmark workflow requires four analysts, the total daily labor cost is \(C(x^H)=4\times720=\$2{,}880\). Suppose that successful processing of an alert produces approximately \$15 of organizational value through reduced analyst effort, improved security posture, faster incident resolution, and reduced business disruption. Processing 1,000 alerts therefore generates a daily gross workflow value of \(V(x^H)=1000\times15=\$15{,}000\). Human analysts are highly reliable but not perfect. If each analyst successfully completes the assigned stage with probability \(\rho_H=0.95\), then \eqref{eq:workflow-reliability} gives \(R(x^H)=0.95^4=0.8145\). Thus approximately 18.5\% of workflows experience some form of escalation, delay, misclassification, or investigation error.

Suppose that workflow failures generate an average expected loss of \$50,000 through missed attacks, delayed responses, operational disruption, compliance violations, and business interruption. The expected daily failure loss is \(L(x^H)=50{,}000(1-0.8145)=\$9{,}275\). Following the net-value definition in \eqref{eq:workflow-net-value}, substituting the benchmark values yields \(W(x^H)=15{,}000-2{,}880-9{,}275=\$2{,}845\).

This benchmark serves as the reference point against which all AI-enabled workflows will be evaluated.

\subsection{Available AI Agents}

Suppose the organization evaluates three commercially available AI agents, \(N=\{A_D,A_T,A_I\}\), where \(A_D\), \(A_T\), and \(A_I\) denote the Detection, Triage, and Investigation Agents, respectively. In this case study, each of the first three stages has one designated AI candidate: \(A_D\) is feasible for stage 1 with task \(\tau_1\), \(A_T\) is feasible for stage 2 with task \(\tau_2\), and \(A_I\) is feasible for stage 3 with task \(\tau_3\). The fourth stage, whose task is \(\tau_4\), remains human-operated. This one-AI-candidate-per-stage structure is a simplifying assumption for the case study, not a restriction of the general framework, which allows multiple AI candidates in the same feasible set \(\mathcal A_k\). The Detection Agent automatically identifies suspicious alerts from network telemetry and log data. The Triage Agent prioritizes alerts according to estimated threat severity and business impact. The Investigation Agent gathers contextual information, correlates indicators of compromise, and generates investigation summaries for analysts.

Assume annual enterprise subscription costs of \$18,000, \$12,000, and \$24,000 for the Detection, Triage, and Investigation Agents, respectively. These correspond to daily subscription costs of approximately \$72, \$48, and \$96. In addition to subscription fees, the agents incur inference costs associated with LLM usage. Suppose the Detection Agent consumes 20 million tokens per day, the Triage Agent 10 million tokens per day, and the Investigation Agent 40 million tokens per day. Using a representative blended inference price of \$5 per million tokens, the resulting daily inference costs are \$100, \$50, and \$200. These subscription, usage, and reliability values are not intended to represent a specific vendor quote; they are scenario parameters chosen to illustrate how changes in cost and reliability propagate through the Agentomics valuation model. Table~\ref{tab:soc-agent-assumptions} summarizes the resulting daily deployment costs and reliability assumptions.

\begin{table}[t]
\centering
\caption{Agent-level cost and reliability assumptions for the SOC case study.}
\label{tab:soc-agent-assumptions}
\begin{tabular}{lcc}
\toprule
Agent & Daily deployment cost & Stage reliability \\
\midrule
Human analyst & \$720 & 0.95 \\
Detection Agent \(A_D\) & \$172 & 0.98 \\
Triage Agent \(A_T\) & \$98 & 0.93 \\
Investigation Agent \(A_I\) & \$296 & 0.90 \\
\bottomrule
\end{tabular}
\end{table}

Although the Detection Agent exceeds human performance, the Triage and Investigation Agents exhibit lower reliability. Consequently, introducing AI may reduce costs while simultaneously increasing failure risk. Determining whether deployment is economically justified therefore requires evaluating productivity gains, deployment costs, reliability effects, and coalition-level complementarities together.

This observation directly reflects the net-value expression in \eqref{eq:workflow-net-value-expanded}. An AI agent may reduce labor costs and improve workflow speed while simultaneously increasing expected failure losses. Consequently, technical capability and economic value are not equivalent concepts.

\subsection{Why Coalition Formation Matters}

A central idea of Section~\ref{sec:coalitions} is that the value of an AI agent cannot generally be determined in isolation. In a workflow, agents interact with upstream information sources, downstream decision makers, human supervisors, and other AI systems. Consequently, the economic contribution of an agent depends on the organizational environment in which it operates.

Consider the Detection Agent. If analysts are unable to process generated alerts efficiently, the agent may create little economic value despite achieving excellent technical performance. Conversely, when paired with an effective triage process and experienced analysts, the same agent may substantially improve workflow performance.

This dependence on organizational context motivates the coalition-based perspective developed in Section~\ref{sec:coalitions}. Rather than asking whether an individual agent performs well in isolation, Agentomics asks how much incremental economic value is created when an agent participates in a particular coalition of humans and AI systems.

From this perspective, each coalition represents a distinct organizational design. Different coalitions correspond to different allocations of responsibilities between humans and machines and therefore generate different levels of value, cost, reliability, and risk.

\subsection{Coalition-Based Deployment}

Table~\ref{tab:soc-coalition-values} reports the coalition-level calculations for all \(2^3\) AI coalitions. Reliability is computed from \eqref{eq:workflow-reliability}; deployment cost aggregates the human and AI costs in Table~\ref{tab:soc-agent-assumptions}; failure loss is \(L(x^S)=50{,}000(1-R(x^S))\); and net value is computed as \(W(x^S)=V(x^S)-C(x^S)-L(x^S)\). Because \(V(x^S)\) is a team-level workflow value, not a sum of isolated stage values, the table allows gross value to vary by coalition in order to reflect productivity, speed, information quality, and complementarity effects. Reliability values are rounded to four decimal places, while monetary values are computed from unrounded reliability products and rounded to the nearest dollar.

\begin{table}[t]
\centering
\caption{Coalition-level workflow economics for the SOC case study.}
\label{tab:soc-coalition-values}
\resizebox{\textwidth}{!}{%
\begin{tabular}{llrrrrrr}
\toprule
Coalition \(S\) & Configuration \(x^S\) & \(R(x^S)\) & \(C(x^S)\) & \(L(x^S)\) & \(V(x^S)\) & \(W(x^S)\) & \(g(S)\) \\
\midrule
\(\emptyset\) & \((H_1,H_2,H_3,H_4)\) & 0.8145 & \$2,880 & \$9,275 & \$15,000 & \$2,845 & \$0 \\
\(\{A_D\}\) & \((A_D,H_2,H_3,H_4)\) & 0.8402 & \$2,332 & \$7,989 & \$13,771 & \$3,450 & \$605 \\
\(\{A_T\}\) & \((H_1,A_T,H_3,H_4)\) & 0.7974 & \$2,258 & \$10,132 & \$15,485 & \$3,095 & \$250 \\
\(\{A_I\}\) & \((H_1,H_2,A_I,H_4)\) & 0.7716 & \$2,456 & \$11,418 & \$16,819 & \$2,945 & \$100 \\
\(\{A_D,A_T\}\) & \((A_D,A_T,H_3,H_4)\) & 0.8225 & \$1,710 & \$8,873 & \$14,703 & \$4,120 & \$1,275 \\
\(\{A_D,A_I\}\) & \((A_D,H_2,A_I,H_4)\) & 0.7960 & \$1,908 & \$10,200 & \$15,708 & \$3,600 & \$755 \\
\(\{A_T,A_I\}\) & \((H_1,A_T,A_I,H_4)\) & 0.7554 & \$1,834 & \$12,230 & \$17,639 & \$3,575 & \$730 \\
\(\{A_D,A_T,A_I\}\) & \((A_D,A_T,A_I,H_4)\) & 0.7792 & \$1,286 & \$11,038 & \$16,884 & \$4,560 & \$1,715 \\
\bottomrule
\end{tabular}}
\end{table}

The benchmark workflow produces a net value of \$2,845 per day. Deploying the Detection Agent alone increases workflow value to \$3,450, generating an incremental surplus of \$605. Deploying the grand coalition increases net value to \$4,560 and generates an incremental surplus of \$1,715. These values quantify the economic surplus generated by AI deployment relative to the benchmark human workflow.

The economically preferred configuration in this case study is therefore the grand coalition \(S=\{A_D,A_T,A_I\}\), corresponding to \(x^S=(A_D,A_T,A_I,H_4)\). It maximizes net workflow value among the configurations in Table~\ref{tab:soc-coalition-values}. This conclusion would not be obtained by maximizing reliability alone: the Detection-only configuration has the highest reliability, \(R(x^{\{A_D\}})=0.8402\), whereas the grand coalition has \(R(x^{\{A_D,A_T,A_I\}})=0.7792\). The case study therefore illustrates that the best human--AI configuration is determined by the full economic objective \(W(x)\), which jointly accounts for gross workflow value, operating cost, and expected failure loss.

\subsection{Economic Interpretation of Coalition Value}

The coalition values computed above represent the central economic quantity in Agentomics. A key insight is that valuation should be performed using incremental surplus rather than technical performance metrics. An agent may achieve impressive benchmark scores while generating little economic surplus once costs and reliability effects are incorporated. Conversely, an agent with modest standalone performance may generate substantial economic value if it improves workflow efficiency or complements other participants. Thus, the coalition-value function in \eqref{eq:coalition-value} plays a role analogous to a profit function in production economics: it measures the additional economic surplus generated by a particular organizational design after accounting for the interaction between productivity, cost, and risk.

\subsection{Marginal Contribution and Shapley-Based Economic Attribution}

The central question is how the total surplus of \$1,715 should be attributed among the participating agents. This attribution problem is fundamentally challenging because workflow outcomes emerge collectively. Although the grand coalition generates \$1,715 of surplus, it is not immediately obvious how much of that value should be assigned to the Detection Agent, the Triage Agent, or the Investigation Agent. One possibility would be to allocate value according to standalone performance, but such an approach ignores complementarities and interaction effects. The Detection Agent may create substantially more value when paired with the Triage Agent than when operating alone, while the Investigation Agent derives much of its usefulness from information generated by upstream stages.

To address this challenge, Section~\ref{sec:coalitions} introduced the Shapley value in \eqref{eq:shapley-value}. The Shapley value may be interpreted as the expected marginal productivity of an agent across all possible organizational structures. Rather than measuring isolated performance, it measures expected contribution to collective value creation by averaging an agent's incremental contribution over all possible orders in which agents could be added to the workflow.

Applying \eqref{eq:shapley-value} to the coalition values in Table~\ref{tab:soc-coalition-values} yields the attribution results in Table~\ref{tab:soc-shapley-pricing}. These values satisfy the efficiency property because \(810+620+285=1{,}715\), which exactly equals the total grand-coalition surplus. The economic interpretation is straightforward: the Detection Agent contributes approximately 47.2\% of total workflow surplus, the Triage Agent contributes approximately 36.2\%, and the Investigation Agent contributes approximately 16.6\%. Importantly, these contributions are not determined solely by individual performance. They reflect the marginal value created by each agent across all possible deployment configurations, allowing the Shapley value to capture complementarities, substitution effects, and workflow interactions that are invisible to conventional benchmark evaluations.

\subsection{Accountability and Responsibility Attribution}

Consider a ransomware incident that successfully bypasses organizational defenses because a malicious alert was incorrectly prioritized during triage. Traditional performance metrics may identify the stage at which the error occurred, but they provide limited guidance for assigning responsibility across the larger workflow. The Agentomics framework separates two related concepts. The first is causal accountability: if audit logs show that the triage decision was the proximate error, then the Triage Agent \(A_T\), together with the organizational actor responsible for deploying and supervising it, is the key accountable player for the incident. The second is systemic accountability: because workflow risk is jointly produced by interacting agents, part of the realized loss can be allocated across the coalition according to each agent's contribution to workflow outcomes.

To make this distinction concrete, suppose the realized incident loss is \(F=\$50{,}000\). Using the Shapley shares from Table~\ref{tab:soc-shapley-pricing} as a systemic accountability baseline assigns 47.2\% of the loss to the Detection Agent, 36.2\% to the Triage Agent, and 16.6\% to the Investigation Agent. Table~\ref{tab:soc-accountability-fault} reports the resulting accountability allocation and translates it into a daily risk adjustment by amortizing the incident over a 100-day operating horizon. Under this systemic allocation, the Triage Agent receives an accountability debit of approximately \$181 per day, reducing its risk-adjusted economic contribution from \$620 to \$439 per day. Under a stricter localized causal rule, the entire incident would be charged to \(A_T\), yielding a \$500 daily debit over the same horizon and a risk-adjusted contribution of only \$120 per day. The choice between these rules is a governance decision: causal attribution identifies the primary agent to audit or replace, while systemic attribution quantifies how the loss should be economically shared across the deployed AI coalition.

\begin{table}[t]
\centering
\caption{Illustrative accountability allocation for a triage fault causing a \$50,000 incident loss.}
\label{tab:soc-accountability-fault}
\resizebox{\textwidth}{!}{%
\begin{tabular}{lrrrrr}
\toprule
Agent & Shapley share & Baseline value & Loss allocation & 100-day debit & Risk-adjusted value \\
\midrule
Detection Agent \(A_D\) & 47.2\% & \$810 & \$23,615 & \$236 & \$574 \\
Triage Agent \(A_T\) & 36.2\% & \$620 & \$18,076 & \$181 & \$439 \\
Investigation Agent \(A_I\) & 16.6\% & \$285 & \$8,309 & \$83 & \$202 \\
\bottomrule
\end{tabular}}
\end{table}

This observation reflects a broader principle developed throughout the paper: accountability requires attribution. An organization cannot determine which agents should be rewarded, trusted, regulated, replaced, or audited without first understanding how those agents contribute to workflow outcomes. The ability to decompose workflow outcomes into agent-level contributions therefore provides a principled foundation for governance, auditing, accountability, and regulatory oversight in hybrid human--AI systems.

\subsection{From Marginal Productivity to Agent Pricing}

\begin{table}[t]
\centering
\caption{Shapley attribution and pricing comparison.}
\label{tab:soc-shapley-pricing}
\resizebox{\textwidth}{!}{%
\begin{tabular}{lrrrrl}
\toprule
Agent & Shapley value & Surplus share & Market price & \(\phi_i(g)-p_i\) & Pricing implication \\
\midrule
Detection Agent \(A_D\) & \$810 & 47.2\% & \$500 & \$310 & Undervalued \\
Triage Agent \(A_T\) & \$620 & 36.2\% & \$900 & \(-\$280\) & Overvalued \\
Investigation Agent \(A_I\) & \$285 & 16.6\% & \$250 & \$35 & Near fair value \\
\bottomrule
\end{tabular}}
\end{table}

Suppose that the marketplace prices for the three agents are \(p_D=\$500\), \(p_T=\$900\), and \(p_I=\$250\) per day. Comparing prices with economic contributions yields the valuation gaps shown in Table~\ref{tab:soc-shapley-pricing}. The Detection Agent appears undervalued because it generates substantially more economic value than its market price suggests. The Triage Agent appears overvalued because its price exceeds its economic contribution. The Investigation Agent is approximately fairly priced.

The accountability calculation above also affects pricing. Before the incident, the Triage Agent is already overvalued by \$280 per day because its market price of \$900 exceeds its Shapley value of \$620. After the triage fault, the systemic accountability adjustment in Table~\ref{tab:soc-accountability-fault} lowers its risk-adjusted value to \$439, increasing the pricing gap to \$461 per day. Under the stricter localized causal rule, its risk-adjusted value would fall to \$120, increasing the pricing gap to \$780 per day. Thus the same attribution framework identifies \(A_T\) as the key accountable agent for the fault and shows how accountability should be reflected in contract pricing, insurance premia, service-level credits, or future procurement decisions.

Classical labor economics often interprets wages through the lens of marginal productivity: workers who contribute more to production generally command higher compensation. Agentomics extends this principle to AI systems by tying compensation to economic contribution rather than to benchmark scores, model size, token consumption, or vendor reputation. Under the proposed Shapley Pricing Equilibrium in \eqref{eq:shapley-pricing-equilibrium}, equilibrium prices are \(p_D^*=810\), \(p_T^*=620\), and \(p_I^*=285\). These prices are determined by economic contribution rather than by subscription heuristics, and the equilibrium may therefore be viewed as a generalization of marginal productivity pricing from traditional labor markets to agentic AI ecosystems.

\subsection{Lessons Learned}

This case study illustrates several central principles of Agentomics. First, technical performance alone is insufficient for determining economic value. Second, the economically preferred configuration is the one that maximizes net workflow value rather than reliability or cost in isolation; in the SOC example, this is the grand coalition \((A_D,A_T,A_I,H_4)\). Third, economic value emerges from interactions among agents rather than from isolated capabilities. Fourth, cooperative-game-theoretic attribution provides a rigorous mechanism for allocating both value and responsibility. Fifth, accountability and valuation are fundamentally linked because both require attribution of contribution. Finally, agent pricing can be grounded in measurable economic contribution rather than heuristic subscription models and can be adjusted when faults reveal realized accountability exposure. More broadly, the example demonstrates how Agentomics provides a unified framework connecting workflow economics, coalition formation, reliability analysis, accountability, value attribution, and market pricing in hybrid human--AI systems.

\section{Conclusions}
\label{sec:conclusion}

This paper introduced Agentomics as a workflow-centered framework for valuing, attributing, and pricing human and artificial agents. The central premise is that the economic value of an AI agent is not an intrinsic property of the agent alone, nor can it be inferred directly from benchmark performance. Rather, agent value depends on the contribution the agent makes to end-to-end workflow outcomes when embedded within a particular organizational configuration. By modeling gross workflow value, deployment costs, workflow reliability, and expected failure losses, the framework provides a unified measure of net workflow surplus for hybrid human--AI systems.

The framework emphasizes that workflow value is a team-level object. In general, the value generated by a workflow need not be separable across stages because complementarities, substitution effects, bottlenecks, supervision, information quality, and downstream decision dependencies can all affect the realized economic outcome. Additive stage-value models are useful in some settings, but they are special cases of a more general workflow-value function. This distinction is important because it prevents valuation from reducing agents to isolated components and instead places attention on the system in which they operate.

The coalition formulation shows why agent value is relational. The marginal contribution of an AI agent depends on the other human and artificial agents with which it operates, the tasks to which it is assigned, and the reliability losses or complementarities created by workflow interactions. The Shapley-value construction provides a principled attribution mechanism that allocates total surplus among participating agents and links valuation to accountability. The resulting Shapley pricing equilibrium offers a normative benchmark for comparing market prices with expected marginal economic contributions. In this sense, the same mathematical object that supports pricing also supports governance: determining which agents create value, which agents destroy value, and which agents warrant closer audit or supervision.

The security-operations case study illustrates how the framework can be applied to a hybrid human--AI workflow. It shows that agents with strong technical performance need not be economically optimal once deployment costs and failure losses are included, while agents with moderate standalone performance may create substantial value through complementarities. The example also demonstrates how attribution can inform accountability and pricing by decomposing total coalition surplus into agent-level marginal contributions.

Several directions remain open for future research. First, future work should develop empirical methods for estimating workflow-value functions from operational data, including settings in which value is only partially observed or is confounded by human supervision. Second, the reliability model can be extended to capture correlated failures, error propagation, adversarial manipulation, and recovery mechanisms rather than relying on conditional independence as a baseline. Third, dynamic models are needed for agents that learn, adapt, or degrade over time, as well as for workflows with stochastic task arrivals and changing organizational demand. Fourth, future work should incorporate endogenous supervision, in which humans and AI systems jointly decide when to verify, intervene, escalate, or delegate. Fifth, the pricing model can be extended to strategic markets with platform fees, bargaining power, asymmetric information, multi-homing, reputation, and competition among agent providers. Finally, the accountability dimension of Agentomics should be connected to regulatory requirements, audit trails, risk management, and organizational governance so that economic attribution can support responsible deployment of agentic AI systems in high-consequence domains.

\bibliographystyle{abbrv}	
\bibliography{refs.bib}
\end{document}